\documentclass[11pt]{article}
\usepackage{amsmath,amssymb,amsthm}
\usepackage[margin=1in]{geometry}
\usepackage{booktabs}
\usepackage{hyperref}
\usepackage{iftex}
\ifXeTeX\usepackage{fontspec}\else\usepackage[T1]{fontenc}\usepackage{lmodern}\fi

\newtheorem{theorem}{Theorem}
\newtheorem{lemma}[theorem]{Lemma}
\newtheorem{corollary}[theorem]{Corollary}
\newtheorem{proposition}[theorem]{Proposition}
\theoremstyle{definition}
\newtheorem{definition}[theorem]{Definition}
\theoremstyle{remark}
\newtheorem{remark}[theorem]{Remark}

\newcommand{\Z}{\mathbb{Z}}
\newcommand{\F}{\mathbb{F}}
\newcommand{\GR}{\mathrm{GR}}
\newcommand{\wt}{\mathrm{wt}}
\newcommand{\Tr}{\mathrm{Tr}}
\newcommand{\CAN}{\mathrm{CAN}}
\newcommand{\cT}{\mathcal{T}}
\newcommand{\cP}{\mathcal{P}}
\newcommand{\ol}[1]{\overline{#1}}

\title{Quaternary Preparata codes are quasi-perfect covering codes\\ in the Hamming metric}
\author{Tamal Maharaj\thanks{Ramakrishna Mission Vivekananda Educational and Research Institute, Belur Math, Howrah, India. Corresponding author: \texttt{tamal@gm.rkmvu.ac.in}.}}
\date{}

\begin{document}
\maketitle

\begin{abstract}
The quaternary Preparata codes of Hammons, Kumar, Calderbank, Sloane and Solé are $\Z_4$-linear codes of length $2^m$ with $4^{2^m-m-1}$ codewords. They are usually studied through their binary Gray images, that is, in the Lee metric. We consider them instead as codes over a $4$-letter alphabet with the Hamming metric. We show that for odd $m$ they have covering radius $2$ and minimum distance $4$, so they are quasi-perfect. For even $m$ the same construction has covering radius $3$. The proof is a short computation with Teichmüller representatives in the Galois ring $\GR(4,m)$, and it also gives the coset weight distribution. The covering density of these codes tends to $9/8$. The best previously known infinite families of quaternary codes with covering radius $2$ have covering density about $1.5$. Consequently $K_4(n,2)\le 4^{\,n-m-1}$ for all $n\ge 2^m$ and odd $m$, which improves the best available bounds for a range of lengths above each $2^m$. For $m=3$ the code is the octacode, and it gives $K_4(8,2)\le 256$; the previous bound was $352$, from 2005, and the best lower bound is $251$. We also give a quaternary code showing $K_4(6,2)\le 48$, where the previous bound was $52$, together with consequences for larger alphabets.
\end{abstract}

\section{Introduction}

Let $K_q(n,R)$ be the smallest size of a $q$-ary code of length $n$ with covering radius $R$: every word of $\Z_q^n$ lies within Hamming distance $R$ of some codeword~\cite{chll}. Writing $V_q(n,R)=\sum_{i\le R}\binom ni (q-1)^i$ for the size of a Hamming ball, the \emph{covering density} of a code $C$ with covering radius $R$ is
\[
 \mu(C)=\frac{|C|\,V_q(n,R)}{q^n}\ \ge 1 ,
\]
with equality exactly for perfect codes. For a linear code of codimension $r$ this is $V_q(n,R)/q^r$. The \emph{length function} $\ell_q(r,R)$ is the smallest length of a $q$-ary linear code with codimension $r$ and covering radius $R$. Whenever $n\ge\ell_q(r,R)$ it gives $K_q(n,R)\le q^{n-r}$. A long line of work constructs infinite families with small asymptotic density. For $R=2$ these families are built from saturating sets in projective spaces; see~\cite{dgmp} and the references there. For $q=4$ and $R=2$ the best known infinite families have asymptotic covering density about $1.504$ for even codimension and $1.587$ for odd codimension~\cite[Table~VI]{dgmp}. It is known that $\ell_4(4,2)=9$~\cite[Proposition~5.1]{dgmp}.

$\Z_4$-linear codes became prominent when Hammons, Kumar, Calderbank, Sloane and Solé~\cite{hkcss} showed that the Kerdock, Preparata and Goethals codes are Gray images of linear codes over $\Z_4$. The covering radii of these $\Z_4$-codes have been studied in the Lee and Euclidean metrics~\cite{aoki,hkcss,hk}; the Lee covering radius is the binary covering radius of the Gray image. Aoki, Gaborit, Harada, Ozeki and Solé~\cite{aoki} note that covering radius can be defined for the Hamming, Lee and Euclidean distances, but they deliberately consider only the last two. As far as we can tell, the Hamming covering radius of these codes, viewed as codes over a $4$-letter alphabet, has not been determined.

\paragraph{Main result.} For $m\ge2$ let $\cP_m$ be the $\Z_4$-linear code of length $N=2^m$ defined in Section~\ref{sec:codes}. For odd $m$ it is the quaternary Preparata code of~\cite{hkcss}, and $\cP_3$ is the octacode.

\begin{theorem}\label{thm:intro}
The code $\cP_m$ has minimum Hamming distance $4$. Its Hamming covering radius is $2$ if $m$ is odd and $3$ if $m$ is even. For odd $m$ its covering density is $(9N^2-3N+2)/(8N^2)$, which tends to $9/8$.
\end{theorem}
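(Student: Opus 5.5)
The plan is to work entirely with syndromes. Index coordinates by the Teichmüller set $\cT\subset\GR(4,m)$, with $|\cT|=N$, and take $\cP_m=\{c\in\Z_4^{\cT}:\sum_x c_x=0,\ \sum_x c_x x=0\}$, the dual of the quaternary Kerdock code. The syndrome map $e\mapsto S(e)=(\sum e_x,\sum e_x x)\in \Z_4\times\GR(4,m)$ is onto, so $\cP_m$ has $4^{m+1}=4N^2$ cosets. Both minimum distance and covering radius then reduce to one question: which syndromes are sums of at most $w$ terms $a\,(1,x)$ with $a\in\Z_4\setminus\{0\}$ and $x\in\cT$, the $x$ distinct? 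Throughout I will use three facts. First, every $z\in\GR(4,m)$ is uniquely $z=u+2v$ with $u,v\in\cT$. Second, for distinct $x,y\in\cT$ the difference $x-y$ is a unit, because $\bar x\neq\bar y$ in $\F_{2^m}$. Third, the addition rule $x+y=t+2\sqrt{xy}$, where $t\in\cT$ is defined by $\bar t=\bar x+\bar y$ and the square root is taken in $\cT$. Reducing an equation modulo $2$, and then looking at its $2$-adic part, turns every case into an equation over $\F_{2^m}$.

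\textbf{Minimum distance.} A weight-$1$ word has first syndrome coordinate nonzero. For weight $2$, the first coordinate forces $b=-a$, and then $a(x-y)=0$ is impossible since $x-y$ is a unit. For weight $3$, I would split according to how many coefficients are odd. If exactly two are odd, the first coordinate cannot vanish. If all three are odd, reduction mod $2$ gives $\bar x+\bar y+\bar z=0$, and the $2$-adic part of the second coordinate, computed with the addition rule, gives $\sqrt{\bar x\bar y}+\dots\ne0$, a contradiction. If exactly one is odd, or none, the conditions already fail mod $2$. To show weight $4$ occurs, I would take $2(1,x)+2(1,y)+2(1,z)+2(1,w)$ with $\bar x+\bar y+\bar z+\bar w=0$. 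Its first coordinate is $8=0$, and its second is $2(x+y+z+w)$, which vanishes because it depends only on the reductions mod $2$.

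\textbf{Covering radius.} Given a target $(s,\sigma)$, I would case-split on $s\in\{0,1,2,3\}$; by negation it suffices to treat $s\in\{0,1,2\}$. Write $\sigma=u+2v$.
\begin{itemize}
\item For $s=1$, try $(1,x)$ alone, or $(2,x)+(3,y)$, or $(1,x)+(0,\cdot)$-type combinations such as $(3,x)+(2,y)$. This leads to $\bar x+\bar y=\bar u$ together with a condition like $\bar x\bar y=\bar v^2+\dots$, which is a quadratic over $\F_{2^m}$.
\item For $s=2$, try $(1,x)+(1,y)$ and $(3,x)+(3,y)$. These give $\bar x+\bar y=\bar u$ and $\sqrt{\bar x\bar y}$ prescribed, again a quadratic $X^2+\bar uX+c=0$.
\item For $s=0$, try $(1,x)+(3,y)$ and $(2,x)+(2,y)$; the second choice covers the case $\sigma\in 2\GR$.
\end{itemize}
Solvability of each quadratic is governed by the trace criterion $\Tr(c/\bar u^2)=0$. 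The main obstacle is to show that, for odd $m$, at least one of the available representations always has the right trace. I would first try pairing each case with its ``complementary'' representation, whose trace condition differs by $\Tr(1)$; for odd $m$ we have $\Tr(1)=1$, so exactly one of the two is solvable. For even $m$ we have $\Tr(1)=0$, so both fail together on half of the relevant syndromes, and this exhibits a syndrome at distance $3$. Adding one more weight-$1$ error reduces any target to a solvable case, which gives covering radius exactly $3$.

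\textbf{Density.} Since the covering radius is $2$ and the minimum distance is $4$ for odd $m$, the words of weight at most $1$ lie in distinct cosets. Counting from the explicit case analysis, every coset is then reached, and the density is $V_4(N,2)/4^{m+1}$. Here $V_4(N,2)=1+3N+\tfrac92N(N-1)$, so the density is $(9N^2-3N+2)/(8N^2)$, which tends to $9/8$. As a byproduct, the coset weight distribution comes from counting the solutions of the quadratics in each case.
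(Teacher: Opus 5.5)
Your covering-radius plan for $s=2$ is the paper's, and your minimum-distance route (a direct check of weights $1$--$3$) is a legitimate alternative to the paper's argument. The paper instead writes a codeword of weight at most $4$ as $v-w$ with $\wt(v),\wt(w)\le2$ and equal syndromes, and reads the possible collisions off the covering analysis. But your weight-$3$ case analysis is wrong as written.

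With exactly two odd coefficients the first syndrome coordinate \emph{can} vanish, since $1+1+2=4$ and $3+3+2=8$. With three odd coefficients it is odd, so no $2$-adic computation is needed. With all coefficients equal to $2$ it is $6\equiv2\pmod 4$, which fails mod $4$ but not mod $2$. The fix is short. The only weight-$3$ patterns with first coordinate $0$ are $a e_x+a e_y+2e_z$ with $a$ odd. For these the second coordinate reduces mod $2$ to $\bar x+\bar y=0$, which contradicts $x\ne y$.

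For the covering radius, the decisive step is $s=2$. Write $\sigma=u+2v\notin2\GR$. Then $e_x+e_y$ reaches $(2,\sigma)$ iff $\Tr(\bar v/\bar u)=0$, and $3e_x+3e_y$ reaches it iff $\Tr(\bar v/\bar u)+\Tr(1)=0$. The targets with $\sigma\in2\GR$ are covered by $2e_t$; the trace criterion does not apply there.

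You misdescribe $s=0$ and $s=1$, and your complementary-pairing mechanism could not settle $s=0$ anyway, because the family $e_x-e_y$ is closed under negation. In fact these two cases have no trace obstruction at all:
\begin{itemize}
\item For $s=1$, reducing $2e_x+3e_y$ mod $2$ gives $\bar y=\bar u$, not $\bar x+\bar y=\bar u$. Then $2x=\sigma+y\in2\GR$ determines $x$.
\item For $s=0$ with $\bar u\ne0$, the addition rule together with $-y=y+2y$ gives $x-y=\tau(\bar x+\bar y)+2\tau(\sqrt{\bar x\bar y}+\bar y)$. Squaring the second condition gives the linear equation $\bar u\,\bar y=\bar v^2$. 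So every element of $\GR\setminus2\GR$ is $x-y$ for exactly one ordered pair of distinct $x,y\in\cT$.
\end{itemize}

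You must actually carry out the $s=0$ computation. It is what gives full coverage at $s=0$ for odd $m$. It is also what justifies your ``one more weight-$1$ error'' step for even $m$, via $2e_0+e_x+3e_y$ with $x-y=\sigma$. With these repairs the argument goes through, including the density count.
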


Theorem~\ref{thm:main} below gives the full coset weight distribution. The consequences are as follows.
\begin{itemize}
\item \textbf{Density.} For odd $m$, $\cP_m$ has length $2q^{(r-2)/2}$ and codimension $r=m+1$, where $q=4$. This is the leading term of the best known $\F_4$-linear families, but without their lower-order terms. The density falls from about $1.504$ to $9/8$ (Section~\ref{sec:density}).
\item \textbf{New upper bounds.} $K_4(n,2)\le 4^{n-m-1}$ for every odd $m\ge3$ and every $n\ge 2^m$. For $2^m\le n<(2^{m+2}-2)/3$ this is smaller than any direct sum of two codes with covering radius $1$, even perfect ones. It also improves the best bounds we could derive from published constructions for $n=32,\dots,36$, $128,\dots,153$ and $512,\dots,591$ (Section~\ref{sec:bounds}).
\item \textbf{The octacode.} $251\le K_4(8,2)\le 256$. The previous upper bound was $352$~\cite{keri-tables}, and the lower bound is from~\cite{gp}. No $\F_4$-linear code can achieve this, because $\ell_4(4,2)=9$ (Section~\ref{sec:octa}). Via a product construction of Kéri and Östergård~\cite{ko} this also improves $K_8(8,2)$, $K_{12}(8,2)$, $K_{16}(8,2)$ and $K_{20}(8,2)$.
\item \textbf{Small cases.} A computer-found code gives $K_4(6,2)\le 48$, improving $52$~\cite{ost99} (Section~\ref{sec:62}).
\end{itemize}

The octacode case was found first, by a computer search for covering codes with a prescribed symmetry group (Section~\ref{sec:methods}). Recognizing the code as $\cP_3$ led to Theorem~\ref{thm:intro}.

\section{Galois rings and Teichmüller representatives}\label{sec:prelim}

We use standard facts about Galois rings; see~\cite{hkcss,wan}. Let $m\ge1$ and let $\GR=\GR(4,m)=\Z_4[x]/(h(x))$, where $h$ is monic of degree $m$ and irreducible modulo $2$. Then $|\GR|=4^m$, and $2\GR$ is the unique maximal ideal. The residue field $\F=\GR/2\GR$ is isomorphic to $\F_{2^m}$; we write $c\mapsto\ol c$ for the reduction map. The \emph{Teichmüller set} $\cT=\{c\in\GR: c^{2^m}=c\}$ consists of $0$ and a cyclic group of order $2^m-1$. Reduction restricts to a bijection $\cT\to\F$; let $\tau:\F\to\cT$ be its inverse. Since $\cT$ is closed under multiplication, $\tau(\alpha\beta)=\tau(\alpha)\tau(\beta)$. Every $c\in\GR$ has a unique \emph{$2$-adic expansion} $c=a+2b$ with $a,b\in\cT$, and here $a=\tau(\ol c)$. Since $2\cdot2\GR=0$, the element $2c$ depends only on $\ol c$: namely $2c=2\tau(\ol c)$. In particular $2a+2b=2\tau(\ol a+\ol b)$ for all $a,b\in\GR$.

\begin{lemma}\label{lem:teich}
For $a,b\in\cT$ with $\alpha=\ol a$ and $\beta=\ol b$,
\[
 a+b=\tau(\alpha+\beta)+2\tau\bigl(\sqrt{\alpha\beta}\bigr).
\]
\end{lemma}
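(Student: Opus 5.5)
The plan is to compute the $2$-adic expansion of $a+b$ directly. By the facts recalled above, $a+b=\tau(\alpha+\beta)+2d$ for a unique $d\in\cT$. The first digit is $\tau(\overline{a+b})=\tau(\alpha+\beta)$, so the only task is to show $\ol d=\sqrt{\alpha\beta}$. Since $2d$ depends only on $\ol d$, it is enough to determine $2d$.

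The key tool is the Frobenius-type map $x\mapsto x^2$ on $\GR$, together with the observation that $(x+2y)^2=x^2+4xy+4y^2=x^2$ for all $x,y\in\GR$. This means $c^2$ depends only on $\ol c$, and in fact $c^2=\tau(\ol c)^2=\tau(\ol c^{\,2})$ because $\tau$ is multiplicative.

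I would apply this to $c=a+b$ and compute $c^2$ in two ways.
\begin{itemize}
\item Directly, $(a+b)^2=a^2+b^2+2ab$. Here $a^2=\tau(\alpha^2)$, $b^2=\tau(\beta^2)$ and $ab=\tau(\alpha\beta)$ all lie in $\cT$.
\item From the expansion, $(\tau(\alpha+\beta)+2d)^2=\tau(\alpha+\beta)^2=\tau(\alpha^2+\beta^2)$.
\end{itemize}
Comparing the two gives
\[
 \tau(\alpha^2)+\tau(\beta^2)=\tau(\alpha^2+\beta^2)-2\tau(\alpha\beta)=\tau(\alpha^2+\beta^2)+2\tau(\alpha\beta),
\]
using $-2=2$. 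This identity holds for all $\alpha,\beta\in\F$.

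Since squaring is a bijection of $\F=\F_{2^m}$, I can replace $\alpha,\beta$ by $\sqrt\alpha,\sqrt\beta$. This yields $a+b=\tau(\alpha)+\tau(\beta)=\tau(\alpha+\beta)+2\tau(\sqrt{\alpha\beta})$, which is the claim.

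The main obstacle is simply spotting the squaring trick. Once one notices that squaring kills the second $2$-adic digit, the rest is bookkeeping with $2x=2\tau(\ol x)$ and the multiplicativity of $\tau$.

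An alternative check would use $d^{2^m}=d$ and raise $a+b$ to the power $2^m$, but that route is messier. The squaring argument needs only the facts stated in Section~\ref{sec:prelim}.
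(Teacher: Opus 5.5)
Your proof is correct and follows the paper's argument. You compute $(a+b)^2$ both as $a^2+b^2+2ab$ and as $\tau(\alpha^2+\beta^2)$, because squaring kills the second $2$-adic digit, and then pass to square roots using that squaring is a bijection of $\F$. Writing the intermediate step as the identity $\tau(\alpha^2)+\tau(\beta^2)=\tau(\alpha^2+\beta^2)+2\tau(\alpha\beta)$ for all $\alpha,\beta\in\F$ is just a slightly more direct packaging of the paper's step $\kappa(a^2,b^2)=\alpha\beta$.
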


\begin{proof}
For $a,b\in\cT$ write $a+b=\tau(\alpha+\beta)+2d$ with $d\in\cT$, and put $\kappa(a,b)=\ol d$. Squaring and working modulo $4$ gives $(a+b)^2=a^2+b^2+2ab$ and $(\tau(\alpha+\beta)+2d)^2=\tau(\alpha+\beta)^2=\tau(\alpha^2+\beta^2)$. Since $a^2,b^2\in\cT$, we also have $a^2+b^2=\tau(\alpha^2+\beta^2)+2d'$ with $\ol{d'}=\kappa(a^2,b^2)$. Comparing, $2(d'+ab)=0$, so $\kappa(a^2,b^2)=\alpha\beta$. Squaring permutes $\cT$, because it permutes $\F$ and commutes with $\tau$. Applying the identity to the Teichmüller square roots of $a$ and $b$ therefore gives $\kappa(a,b)=\sqrt{\alpha\beta}$.
\end{proof}

\begin{corollary}\label{cor:teich}
For $a,b\in\cT$ with $\alpha=\ol a$ and $\beta=\ol b$,
\begin{align*}
 a-b&=\tau(\alpha+\beta)+2\tau\bigl(\sqrt{\alpha\beta}+\beta\bigr),\\
 -(a+b)&=\tau(\alpha+\beta)+2\tau\bigl(\sqrt{\alpha\beta}+\alpha+\beta\bigr).
\end{align*}
\end{corollary}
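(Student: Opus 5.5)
Both identities should follow from Lemma~\ref{lem:teich}, together with the fact from Section~\ref{sec:prelim} that $2c=2\tau(\ol c)$ depends only on $\ol c$. The plan is to express $-b$ in terms of Teichmüller elements and then absorb the correction terms of order $2$.

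For the first identity, write $-b=b-2b$, using $2b=2\tau(\beta)$. Then
\[
a-b=(a+b)-2b=\tau(\alpha+\beta)+2\tau\bigl(\sqrt{\alpha\beta}\bigr)-2\tau(\beta)
\]
by Lemma~\ref{lem:teich}. Since $-2x=2x$ in $\GR$ (because $4=0$), the last two terms become
\[
2\tau\bigl(\sqrt{\alpha\beta}\bigr)+2\tau(\beta)=2\tau\bigl(\sqrt{\alpha\beta}+\beta\bigr),
\]
by the rule $2a'+2b'=2\tau(\ol{a'}+\ol{b'})$. This gives the first formula.

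For the second identity, negate $a+b$ from Lemma~\ref{lem:teich}. We have $-2\tau(\sqrt{\alpha\beta})=2\tau(\sqrt{\alpha\beta})$. For the first term, $-\tau(\alpha+\beta)=\tau(\alpha+\beta)-2\tau(\alpha+\beta)=\tau(\alpha+\beta)+2\tau(\alpha+\beta)$. Combining the two terms of order $2$ again by the additivity rule gives
\[
-(a+b)=\tau(\alpha+\beta)+2\tau\bigl(\sqrt{\alpha\beta}+\alpha+\beta\bigr).
\]

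The uniqueness of the $2$-adic expansion shows that these are the expansions, but that is not needed: the statement is an equality in $\GR$. There is no real obstacle here. The only point requiring care is the sign bookkeeping, namely that $-c=c+2c$ for every $c\in\GR$ and that $2c$ depends only on $\ol c$.
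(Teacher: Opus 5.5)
Your proof is correct and essentially the paper's own argument: the paper likewise uses $-b=b+2b$ and $-(a+b)=(a+b)+2(a+b)$, applies Lemma~\ref{lem:teich}, and merges the terms in $2\GR$ via $2c+2c'=2\tau(\ol c+\ol{c'})$. The only cosmetic difference is that for the second identity you negate the expansion of $a+b$ term by term, whereas the paper adds $2(a+b)=2\tau(\alpha+\beta)$ to it.
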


\begin{proof}
Use $-b=b+2b$ and $-(a+b)=(a+b)+2(a+b)$, together with Lemma~\ref{lem:teich} and the rule $2c+2c'=2\tau(\ol c+\ol{c'})$.
\end{proof}

We also need the following standard fact about quadratic equations in characteristic~$2$; see~\cite[Theorem~2.25]{ln}. Here $\Tr:\F\to\F_2$ is the absolute trace. For $\xi\in\F^*$ and $c\in\F$, the polynomial $Z^2+\xi Z+c$ has a root in $\F$ if and only if $\Tr(c/\xi^2)=0$. In that case it has two distinct roots, whose sum is $\xi$. We apply this with $c=\eta^2$, where $\Tr(\eta^2/\xi^2)=\Tr(\eta/\xi)$.

\section{The codes \texorpdfstring{$\cP_m$}{Pm} and their covering radius}\label{sec:codes}

\begin{definition}
Let $m\ge2$ and $N=2^m$. Index coordinates by $\cT$, and define the \emph{syndrome map}
\[
 \sigma:\Z_4^{\cT}\to\Z_4\times\GR,\qquad \sigma(x)=\Bigl(\sum_{t\in\cT}x_t,\ \sum_{t\in\cT}x_t\,t\Bigr).
\]
Set $\cP_m=\ker\sigma$. As $\GR\cong\Z_4^m$ as a $\Z_4$-module, $\cP_m$ is the $\Z_4$-linear code with parity-check matrix whose columns are $(1,t)$, $t\in\cT$, with $t$ written in a $\Z_4$-basis of $\GR$.
\end{definition}

\begin{remark}
Let $\mathrm{T}:\GR\to\Z_4$ be the generalized trace. The code $\{(\mathrm{T}(\lambda t)+\varepsilon)_{t\in\cT}:\lambda\in\GR,\ \varepsilon\in\Z_4\}$ is the quaternary Kerdock code of~\cite{hkcss} (for odd $m$). The trace form is non-degenerate, so its dual is exactly $\cP_m$, the quaternary Preparata code of~\cite{hkcss}. For $m=3$ it is the octacode, and an explicit generator matrix is given in Section~\ref{sec:octa}. For even $m$ the Gray image of $\cP_m$ is not a Preparata-type code, but $\cP_m$ is still a well-defined $\Z_4$-linear code.
\end{remark}

For $x\in\Z_4^{\cT}$ let $\wt(x)$ be its Hamming weight. The \emph{weight} of a coset of $\cP_m$ is the smallest weight of its elements, and the covering radius is the largest coset weight. Write $e_t$ for the unit vector at coordinate $t$.

\begin{theorem}\label{thm:main}
Let $m\ge2$ and $N=2^m$.
\begin{enumerate}
\item[(a)] The map $\sigma$ is onto. Hence $|\cP_m|=4^{N-m-1}$ and $\cP_m$ has $4N^2=4^{m+1}$ cosets.
\item[(b)] If $m$ is odd, the numbers of cosets of weight $0,1,2$ are $1,\ 3N,\ 4N^2-3N-1$. So $\cP_m$ has covering radius $2$.
\item[(c)] If $m$ is even, the numbers of cosets of weight $0,1,2,3$ are $1,\ 3N,\ 4N^2-3N-1-\binom N2,\ \binom N2$. So $\cP_m$ has covering radius $3$.
\item[(d)] $\cP_m$ has minimum Hamming distance $4$. If $m$ is odd, its codewords of weight $4$ are exactly the $N(N-1)(N-2)/24$ vectors $2\sum_{t\in S}e_t$, where $S$ runs over the $4$-subsets of $\cT$ with $\sum_{t\in S}\ol t=0$.
\end{enumerate}
In particular, for odd $m$ the code $\cP_m$ is quasi-perfect in the Hamming metric: its packing radius is $1$ and its covering radius is $2$.
\end{theorem}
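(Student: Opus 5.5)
The plan is to compute, for every syndrome $(s,g)\in\Z_4\times\GR$, the smallest number of columns $c\,(1,t)$ (with $c\in\{1,2,3\}$) that sum to it. This smallest number is the coset weight. The tools are Lemma~\ref{lem:teich}, Corollary~\ref{cor:teich} and the trace criterion for quadratics. For every $g$ write $g=\tau(\gamma)+2\tau(\delta)$, the 2-adic expansion.

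The weight-$1$ syndromes are $c(1,t)$, and these $3N$ are visibly distinct. For a weight-$2$ syndrome $c_1(1,a)+c_2(1,b)$ with $a\ne b$ in $\cT$, $\alpha=\ol a$, $\beta=\ol b$, I split by $s=c_1+c_2$.

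\textbf{$s$ odd.} Here $\{c_1,c_2\}=\{1,2\}$ or $\{3,2\}$.
\begin{itemize}
\item For $s=3$ the sum is $a+2b$. Any $g$ arises, because $a=\tau(\gamma)$ and $b=\tau(\delta)$ are forced. In the degenerate case $a=b$ we have $g=3a$, which is the weight-$1$ syndrome of $3e_a$.
\item For $s=1$ the sum is $-a+2b=a+2(a+b)$. Again every $g$ is reached, with at most one exceptional $g$ per $a$, and that one has weight $1$.
\end{itemize}

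\textbf{$s=0$.} Here $\{c_1,c_2\}=\{1,3\}$ or $\{2,2\}$.
\begin{itemize}
\item If $\gamma\ne0$, Corollary~\ref{cor:teich} says that $a-b$ requires $\alpha+\beta=\gamma$ and $\sqrt{\alpha\beta}+\beta=\delta$. Squaring the second condition gives $\beta^2+\gamma\beta=\delta^2+\beta^2$, so $\beta=\delta^2/\gamma$. This is a unique solution with $\alpha\ne\beta$.
\item If $\gamma=0$ and $g=2\tau(\delta)$ with $\delta\neq 0$, we use $2a+2b=2\tau(\alpha+\beta)$ with $\alpha+\beta=\delta$.
\item The syndrome $(0,0)$ is the zero coset.
\end{itemize}
So every syndrome with $s$ odd or $s=0$ has weight $\le2$, for every $m$.

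\textbf{$s=2$.} Here $\{c_1,c_2\}=\{1,1\}$ or $\{3,3\}$, and the syndromes $(2,2t)$ have weight $1$. If $\gamma\ne0$, Lemma~\ref{lem:teich} shows that $a+b$ requires $\alpha+\beta=\gamma$ and $\alpha\beta=\delta^2$. So $\alpha,\beta$ are the two distinct roots of $Z^2+\gamma Z+\delta^2$, which exist iff $\Tr(\delta/\gamma)=0$. Corollary~\ref{cor:teich} shows that $-(a+b)$ instead requires $\alpha\beta=(\delta+\gamma)^2$, which is solvable iff $\Tr(\delta/\gamma)+\Tr(1)=0$.
\begin{itemize}
\item For $m$ odd, $\Tr(1)=1$, so exactly one of the two works. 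Hence every coset has weight $\le2$. The count in (b) then follows by subtracting the $1+3N$ cosets of weight $\le1$ from $4N^2$.
\item For $m$ even, $\Tr(1)=0$, and neither works precisely when $\Tr(\delta/\gamma)=1$. That happens for $N-1$ values of $\gamma$ and $N/2$ values of $\delta$ each, giving $\binom N2$ syndromes of weight $\ge3$. Each of them has weight $\le3$, because $(2,g)=(2,2z)+(0,g-2z)$ and the second summand has weight $\le2$ by the $s=0$ case. This gives (c).
\end{itemize}
Surjectivity (a) comes out along the way, since every syndrome was hit.

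For (d), I will show there are no codewords of weight $\le3$. Weight $\le2$ is excluded because the weight-$\le1$ syndromes are distinct. For weight $3$, suppose $c_1(1,a)+c_2(1,b)+c_3(1,c)=0$ with $a,b,c$ distinct.
\begin{itemize}
\item If all $c_i$ are even, reducing gives $\alpha+\beta+\gamma=0$ with the $c_i/2$ summing to $0$ mod $2$. But three odd numbers cannot sum to an even number.
\item Otherwise the first coordinate forces exactly two of the $c_i$ to be odd. This is equivalent to equating a weight-$2$ syndrome $c_1(1,a)+c_2(1,b)$ with a weight-$1$ syndrome $-c_3(1,c)$, and the uniqueness computations above exclude that.
\end{itemize}
Codewords $2\sum_{t\in S}e_t$ with $\sum_{t\in S}\ol t=0$ exist for $m\ge2$, so the minimum distance is exactly $4$.

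The main obstacle is the exact list of weight-$4$ codewords for odd $m$. I would try a double count. Every weight-$2$ vector lies in a coset of weight exactly $2$, and the number of such vectors is $9\binom N2$, while there are $4N^2-3N-1$ such cosets. For each coset, the number of weight-$2$ representatives is read off from the case analysis: one solution in the $s=0$ and $s$ odd cases, and one unordered root pair in the $s=2$ case. Cosets hit more than once, namely the $s=0$ cosets reached by both the $\{1,3\}$ and $\{2,2\}$ patterns, produce weight-$4$ codewords as differences of representatives. Matching this count against the weight-$4$ codewords, and checking that differences of two representatives with odd entries never vanish, should yield exactly the $N(N-1)(N-2)/24$ codewords $2\sum_{t\in S}e_t$.
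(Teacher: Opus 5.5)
Your treatment of (a)--(c) follows the paper's syndrome-by-syndrome analysis and is correct. Your proof that the minimum distance is $4$ is a valid alternative to the paper's splitting argument: you exclude codewords of weight $\le3$ directly and then exhibit the codewords $2\sum_{t\in S}e_t$.

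The gap is the second half of (d), the exact list of weight-$4$ codewords for odd $m$. You leave it as a sketch, and the sketch rests on a false claim. No $s=0$ coset is reached by both the $\{1,3\}$ and the $\{2,2\}$ patterns. For $a\ne b$ the element $a-b$ reduces to $\alpha+\beta\ne0$, so it lies outside $2\GR$, whereas $2a+2b\in2\GR$. Nor does each $s=0$ coset have a single weight-$2$ representative. The syndrome $(0,2\tau(\delta))$ with $\delta\ne0$ is reached by all $N/2$ vectors $2e_a+2e_b$ with $\alpha+\beta=\delta$. That is where the excess $9\binom N2-(4N^2-3N-1)=(N-1)(N/2-1)$ actually comes from; with your description the count does not balance. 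A difference of a $\{1,3\}$-representative and a $\{2,2\}$-representative would have odd entries, so your plan cannot produce the vectors $2\sum_S e_t$. In any case, a count gives only the number of weight-$4$ codewords, not their form.

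What is needed is a structural argument. For odd $m$, your case analysis, once the multiplicity above is recorded, shows the following. Every syndrome reached by a vector of weight $\le2$ is reached by exactly one such vector, except the syndromes $(0,2\tau(\delta))$ with $\delta\ne0$. Their preimages of weight $\le2$ are exactly the vectors $2e_a+2e_b$. For $s=2$ this uses two facts: the root pair is unique, and $e_a+e_b$ and $3e_{a'}+3e_{b'}$ never share a syndrome. The second fact is exactly where $\Tr(1)=1$ enters.

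Now take a codeword $c$ of weight $4$ and split its support into two pairs. This writes $c=v-w$ with $\wt(v)=\wt(w)=2$, $v\ne w$ and $\sigma(v)=\sigma(w)$. Uniqueness forces $v=2e_a+2e_b$ and $w=2e_{a'}+2e_{b'}$. Hence $c=2\sum_{t\in S}e_t$ with $|S|=4$, and the syndrome condition gives $\sum_{t\in S}\ol t=0$. Conversely, every such vector is a codeword.

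Finally, the zero-sum $4$-subsets of $\F$ number $N(N-1)(N-2)/24$. Choose three distinct elements in order; the fourth is their sum, which is automatically distinct from them. Each set arises $24$ times.
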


\begin{proof}
The vectors of weight at most $2$ are $0$; the vectors $ae_t$; and the vectors $ae_t+be_u$ with $t\ne u$, where $a,b\in\{1,2,3\}$. Their syndromes are $(a,at)$ and $(a+b,\,at+bu)$. For each $s\in\Z_4$ we find which syndromes $(s,x)$ are reached, and how often. Throughout, $x=\tau(\xi)+2\tau(\eta)$ is the $2$-adic expansion of $x\in\GR$, and $\alpha=\ol t$, $\beta=\ol u$.

\emph{Case $s=1$.} The vectors are $e_u$, with syndrome $(1,u)$, and $3e_u+2e_t$ with $t\ne u$, with syndrome $(1,-u+2t)$. Since $u=-u+2u$, the reached values of $x$ are exactly $-u+2t$ with $u,t\in\cT$. The weight is $1$ if $t=u$ and $2$ otherwise. Given $x$, the condition $x\equiv-u\equiv u\pmod2$ forces $u=\tau(\ol x)$. Then $x+u\in2\GR$, so $x+u=2t$ for a unique $t\in\cT$. Hence every $(1,x)$ is reached by exactly one vector of weight at most $2$. This gives $N$ cosets of weight $1$ (those with $x\in\cT$) and $N^2-N$ of weight $2$. The map $x\mapsto -x$ preserves weights and negates syndromes, so the case $s=3$ is identical.

\emph{Case $s=0$.} The vectors are $0$; $e_t-e_u$ with $t\ne u$, with syndrome $(0,t-u)$; and $2e_t+2e_u$ with $t\ne u$, with syndrome $(0,2\tau(\alpha+\beta))$. The last kind reach every nonzero element of $2\GR$; for instance take $u=0$. For $x\notin2\GR$ we have $\xi\ne0$. By Corollary~\ref{cor:teich}, $t-u=x$ holds if and only if $\alpha+\beta=\xi$ and $\sqrt{\alpha\beta}+\beta=\eta$. Squaring the second equation and substituting $\alpha=\xi+\beta$ gives $\xi\beta=\eta^2$. Conversely, $\beta=\eta^2/\xi$ and $\alpha=\xi+\beta$ satisfy both equations. So $(t,u)\mapsto t-u$ is a bijection from ordered pairs of distinct elements of $\cT$ onto $\GR\setminus2\GR$. Every $(0,x)$ is reached. There is one coset of weight $0$, the other $N^2-1$ cosets have weight $2$, and no vector of weight $1$ has $s=0$.

\emph{Case $s=2$.} The vectors are $2e_t$, with syndrome $(2,2t)$, reaching all of $2\GR$ exactly once; $e_t+e_u$, with syndrome $(2,t+u)$; and $3e_t+3e_u$, with syndrome $(2,-(t+u))$. Here $t\ne u$, so these last two lie outside $2\GR$. Let $x\notin2\GR$. By Lemma~\ref{lem:teich}, $x=t+u$ exactly when $\alpha,\beta$ are the roots of $Z^2+\xi Z+\eta^2$. By the fact at the end of Section~\ref{sec:prelim}, such $t,u$ exist if and only if $\Tr(\eta/\xi)=0$, and then the pair $\{t,u\}$ is unique. By Corollary~\ref{cor:teich}, $x=-(t+u)$ exactly when $\alpha,\beta$ are the roots of $Z^2+\xi Z+(\eta+\xi)^2$. Such $t,u$ exist if and only if $\Tr(\eta/\xi)+\Tr(1)=0$. Now $\Tr(1)\equiv m\pmod 2$.

If $m$ is odd, exactly one of the two conditions holds. So every $(2,x)$ is reached by exactly one vector of weight at most $2$: there are $N$ cosets of weight $1$ and $N^2-N$ of weight $2$.

If $m$ is even, the two conditions coincide. The $(N-1)N/2=\binom N2$ syndromes $(2,x)$ with $\xi\ne0$ and $\Tr(\eta/\xi)=1$ are then not reached by any vector of weight at most $2$. For such $x$, the case $s=0$ gives distinct $t,u\in\cT$ with $t-u=x$. Then $2e_0+e_t+3e_u$ has syndrome $(2,x)$. It has weight $3$, because if $0\in\{t,u\}$ it would have weight at most $2$, which we have excluded.

Collecting the cases proves (b) and (c). Every syndrome is reached, so (a) holds.

For (d), let $c\ne0$ be a codeword with $\wt(c)\le4$. Split its support into two parts of size at most $2$. This writes $c=v-w$ with $\wt(v),\wt(w)\le2$, $v\ne w$ and $\sigma(v)=\sigma(w)$. By the analysis above, such coincidences occur only in two situations. The first is two vectors $2e_t+2e_u$ and $2e_{t'}+2e_{u'}$ with $\ol t+\ol u=\ol{t'}+\ol{u'}$; the two pairs are then disjoint. The second, only for even $m$, is $e_t+e_u$ and $3e_{t'}+3e_{u'}$ with the same syndrome. These pairs are also disjoint: if $t=t'$, then $2t+u+u'=0$, which forces $\ol u=\ol{u'}$ and then $t=u$. In every case $\wt(c)=4$, so the minimum distance is $4$. For odd $m$ only the first situation occurs. It gives exactly the vectors $2\sum_{t\in S}e_t$ with $|S|=4$ and $\sum_S\ol t=0$. The number of $4$-subsets of $\F$ with zero sum is $N(N-1)(N-2)/24$.
\end{proof}

\begin{remark}
For odd $m$ the proof shows that every coset of $\cP_m$ has a unique leader of weight at most $2$, except the $N-1$ cosets with syndrome $(0,x)$, $0\ne x\in2\GR$. Each of those contains $N/2$ leaders of the form $2e_t+2e_u$. This accounts for the whole excess $V_4(N,2)-4N^2=(N-1)(N-2)/2$. The minimum Hamming weight $4$ is consistent with~\cite{iy}, which treats generalized Hamming-type codes over Galois rings. The parity of $m$ enters only through $\Tr(1)$. A parity condition also appears in a packing problem. Honold and Landjev~\cite[Remark~3.7]{hl} showed that the Teichmüller set of a Galois ring of characteristic $4$ and degree $f$ is a cap in the corresponding projective Hjelmslev geometry if and only if $f$ is odd.
\end{remark}

\section{Covering density}\label{sec:density}

For odd $m$, Theorem~\ref{thm:main} gives
\[
 \mu(\cP_m)=\frac{V_4(N,2)}{4^{m+1}}=\frac{1+3N+\tfrac92N(N-1)}{4N^2}=\frac{9N^2-3N+2}{8N^2}\ \longrightarrow\ \frac98 .
\]
The code has quaternary codimension $r=m+1$ (even) and length $2^{r-1}=2q^{(r-2)/2}$ with $q=4$. Table~\ref{tab:density} compares it with the best known quaternary codes of the same codimension and covering radius $2$, which are $\F_4$-linear~\cite{dgmp}. The asymptotic density of the best known families is $1.504$ for even and $1.587$ for odd codimension~\cite[Table~VI]{dgmp}. We found no later improvement for $q=4$.

\begin{table}[ht]
\centering
\begin{tabular}{cccccc}
\toprule
$r$ & sphere bound & best known $n$~\cite{dgmp} & density & $\cP_{r-1}$: $n$ & density\\
\midrule
4 & 8 & 9 (exact) & 1.375 & 8 & 1.082\\
6 & 31 & 37 & 1.491 & 32 & 1.114\\
8 & 121 & 154 & 1.625 & 128 & 1.122\\
10 & 483 & 592 & 1.503 & 512 & 1.124\\
\bottomrule
\end{tabular}
\caption{Quaternary codes with covering radius $2$ and $4^r$ cosets (codimension $r$). The sphere bound is the least $n$ with $V_4(n,2)\ge4^r$.}\label{tab:density}
\end{table}

\begin{remark}
Binary codes with covering radius $2$ and density tending to $1$ are known~\cite{struik}, and the binary family of Etzion and Greenberg~\cite{eg} has exactly the densities of Table~\ref{tab:density}, since $V_2(3n-1,2)=V_4(n,2)$. Mapping $\cP_m$ to binary codes of length $3\cdot2^m$ by the standard construction~\cite[Lemma~4.74]{struik} gives nothing better than these. Our contribution is to the quaternary case. There the construction appears to be special to the prime $2$. Even if the analogous codes over $\GR(p^2,m)$ had covering radius $2$, their density would be about $(p^2-1)^2/(2p^2)$. This is $32/9$ for $p=3$, well above the densities $1.707$ and $1.782$ of the known families for $q=9$~\cite[Table~VI]{dgmp}.
\end{remark}

\section{New upper bounds on \texorpdfstring{$K_4(n,2)$}{K4(n,2)}}\label{sec:bounds}

\begin{corollary}\label{cor:bounds}
For every odd $m\ge3$ and every $n\ge 2^m$,
\[
 K_4(n,2)\le 4^{\,n-m-1}.
\]
\end{corollary}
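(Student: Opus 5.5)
The plan is to combine Theorem~\ref{thm:main} with the standard fact that appending free coordinates does not increase the covering radius. For $n=2^m$ the bound is immediate. By Theorem~\ref{thm:main}(a),(b), $\cP_m$ is a code over the $4$-letter alphabet $\Z_4$ of length $N=2^m$, with $|\cP_m|=4^{N-m-1}$ and Hamming covering radius $2$ when $m$ is odd. Only the Hamming metric and the alphabet size enter the definition of $K_4(n,2)$, not any linearity over $\F_4$. So $\cP_m$, with $\Z_4$ identified with any $4$-letter alphabet, witnesses $K_4(2^m,2)\le 4^{2^m-m-1}$.

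For $n>2^m$ I would use the direct sum $C=\cP_m\oplus\Z_4^{\,n-2^m}$. This is the set of words whose first $2^m$ coordinates form a codeword of $\cP_m$ and whose remaining $n-2^m$ coordinates are arbitrary. Then $|C|=4^{2^m-m-1}\cdot4^{n-2^m}=4^{n-m-1}$. To bound the covering radius, take any word $y=(y',y'')$ with $y'\in\Z_4^{2^m}$. Choose $c'\in\cP_m$ with $d(y',c')\le2$, which Theorem~\ref{thm:main}(b) guarantees. Then $(c',y'')\in C$ lies within distance $2$ of $y$. This is the usual inequality $K_q(n+1,R)\le qK_q(n,R)$, iterated $n-2^m$ times, so $K_4(n,2)\le 4^{n-m-1}$.

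There is no real obstacle here. The only point needing care is to confirm that the covering radius statement really is the Hamming one, which it is, and that $m$ odd is required for radius $2$. Part (c) shows that for even $m$ the radius is $3$, so the corollary is correctly restricted to odd $m\ge3$. The restriction $m\ge3$ rather than $m\ge1$ just reflects that Section~\ref{sec:codes} defines $\cP_m$ only for $m\ge2$.
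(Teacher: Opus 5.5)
Your proposal is correct and follows the paper's own proof, which is exactly the direct sum of $\cP_m$ with the full space $\Z_4^{\,n-2^m}$. Your covering argument and the size count $4^{2^m-m-1}\cdot 4^{n-2^m}=4^{n-m-1}$ simply write out that one-line proof in full.
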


\begin{proof}
Take the direct sum of $\cP_m$ with the full space $\Z_4^{\,n-2^m}$.
\end{proof}

To our knowledge no table of bounds on $K_4(n,2)$ goes beyond $n=11$. For $n\ge 12$ one compares with what can be derived from published constructions. Linear codes of codimension $m+1$ with covering radius $2$ are known only for $n\ge 37,154,592$ when $m=5,7,9$ (Table~\ref{tab:density}). Below these lengths the best linear bound is $4^{n-m}$, four times larger. Direct sums of two codes with covering radius $1$ cannot compete near $n=2^m$ either. If $n=n_1+n_2$, the sphere bound gives
\[
 K_4(n_1,1)K_4(n_2,1)\ \ge\ \frac{4^n}{(1+3n_1)(1+3n_2)}\ \ge\ \frac{4^n}{(1+3n/2)^2},
\]
and this exceeds $4^{n-m-1}$ whenever $n<(2^{m+2}-2)/3$. For $n=32,\dots,36$ the best bound we could derive otherwise is $2\cdot4^{n-6}$. It comes from the direct sum of the perfect $[21,18]_4$ Hamming code with Kéri's code showing $K_4(11,1)\le131072$~\cite{keri-tables}, followed by appending coordinates. Corollary~\ref{cor:bounds} halves this.

\section{The octacode and \texorpdfstring{$K_4(8,2)$}{K4(8,2)}}\label{sec:octa}

For $m=3$, $\cP_3$ is the octacode. One generator matrix is $[I_4\mid A]$ with
\[
 A=\begin{pmatrix}1&3&1&2\\1&2&3&1\\1&1&2&3\\2&3&3&3\end{pmatrix}.
\]
This code is self-dual with Hamming weight enumerator $1+14z^4+112z^5+112z^7+17z^8$. By Theorem~\ref{thm:main} its cosets have weights $0,1,2$ with multiplicities $1,24,231$.

\begin{corollary}
$251\le K_4(8,2)\le256$.
\end{corollary}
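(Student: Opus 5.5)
The upper bound comes from the octacode itself. I will invoke Theorem~\ref{thm:main}(b) with $m=3$: the code $\cP_3$ is a quaternary code of length $8$ with covering radius $2$, and by part (a) it has $4^{8-3-1}=4^4=256$ codewords. Hence $K_4(8,2)\le 256$. This is also the case $m=3$, $n=8$ of Corollary~\ref{cor:bounds}. To make the bound independently checkable, I would also note that the generator matrix $[I_4\mid A]$ spans the same code, which can be confirmed by verifying that each row has zero syndrome under $\sigma$ for a suitable labelling of the coordinates by $\cT$. Its coset weight distribution $1,24,231$ is the one the theorem predicts: $3N=24$ and $4N^2-3N-1=256-25=231$.

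For the lower bound I would not attempt a new argument. The bound $K_4(8,2)\ge 251$ is the best known lower bound, cited from~\cite{gp}, so I would quote it. If a self-contained justification were wanted, I would start from the sphere-covering bound. Here $V_4(8,2)=1+24+252=277$, so $K_4(8,2)\ge 4^8/277$, which gives about $237$. This falls short of $251$, so reaching $251$ needs the excess-counting refinements of~\cite{gp}, and I would simply cite them.

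The only real obstacle is identifying the explicit matrix with $\cP_3$. The corollary itself needs only Theorem~\ref{thm:main} applied to the code as defined via $\sigma$, together with the cited lower bound. So the matrix identification is optional for the proof, and I would keep it as a remark rather than a step.
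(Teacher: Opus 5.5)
Your proposal is correct and follows the paper's argument. The upper bound is Theorem~\ref{thm:main}(a),(b) with $m=3$: the octacode $\cP_3$ has $4^4=256$ codewords and covering radius $2$. The lower bound $251$ is quoted from Gijswijt and Polak~\cite{gp}. Your sphere-bound computation ($V_4(8,2)=277$, which gives only $K_4(8,2)\ge 237$) correctly shows why that citation is needed. The matrix identification is indeed optional; if you do it, zero syndromes give containment, and the systematic form $[I_4\mid A]$ gives exactly $4^4$ codewords, hence equality.
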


In Kéri's tables the previous bound was $240\le K_4(8,2)\le352$~\cite{keri-tables}. The upper bound $352$ is an unpublished computer result from 2005, due to Bertolo, Di Pasquale and Santisi. The lower bound was raised to $251$ by Gijswijt and Polak~\cite{gp}, who still list $352$ as the best known upper bound. No $\F_4$-linear code can give Corollary~\ref{cor:bounds} for $m=3$: a code of length $8$ and codimension $4$ with covering radius $2$ would correspond to an $8$-point $1$-saturating set in $\mathrm{PG}(3,4)$, and $\ell_4(4,2)=9$~\cite[Proposition~5.1]{dgmp}. Davydov et al.\ also remark that no codes of codimension $4$ with covering radius $2$ and length $n<2q+1$ seem to be known~\cite{dgmp}. The octacode has length $2q=8$.

\paragraph{Larger alphabets.} Kéri and Östergård~\cite{ko} proved $K_{tq}(n,R)\le\sigma_t(n,n-R)\,K_q(n,R)$. Here $\sigma_t(n,k)=\CAN(k,n,t)$ is the size of the smallest $t$-ary covering array of strength $k$ with $n$ columns. The previous entry $K_8(8,2)\le29920=85\cdot352$ came from this rule with $\CAN(6,8,2)=85$~\cite{keri-tables}. With $256$ in place of $352$ we obtain the following.

\begin{corollary}
$K_8(8,2)\le21760$, $K_{12}(8,2)\le290304$, $K_{16}(8,2)\le1441792$ and $K_{20}(8,2)\le6400000$.
\end{corollary}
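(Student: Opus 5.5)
The plan is to apply the product rule of Kéri and Östergård, $K_{tq}(n,R)\le\sigma_t(n,n-R)\,K_q(n,R)$, with $q=4$, $n=8$, $R=2$ and $t\in\{2,3,4,5\}$. This gives alphabet sizes $tq=8,12,16,20$. In each case the quaternary factor $K_4(8,2)$ is replaced by the new bound $256$ coming from the octacode $\cP_3$, namely the preceding corollary together with Theorem~\ref{thm:main}(b) for $m=3$. The other factor is $\sigma_t(8,6)=\CAN(6,8,t)$, the size of the smallest $t$-ary covering array of strength $6$ on $8$ columns. It suffices to use any known upper bound on it, since the rule is monotone in that factor.

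The key steps, in order, are the following.
\begin{enumerate}
\item Recall that the product rule needs only a code over $\Z_q$ with covering radius $R$ and a covering array $\mathrm{CA}(N;n-R,n,t)$. So we may plug in the octacode as the $\Z_4$-code of length $8$ with $256$ codewords and covering radius $2$.
\item For $t=2$, use $\CAN(6,8,2)=85$, which is the value already used in the tables for the old entry $29920=85\cdot352$. This gives $85\cdot256=21760$.
\item For $t=3,4,5$, divide the claimed bounds by $256$. The required covering-array sizes are $290304/256=1134$, $1441792/256=5632$ and $6400000/256=25000$. So we need $\CAN(6,8,3)\le1134$, $\CAN(6,8,4)\le5632$ and $\CAN(6,8,5)\le25000$.
\end{enumerate}

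These three upper bounds are exactly the values currently used in Kéri's tables for the previous bounds. In particular, the tables' old entries equal the same factor times $352$: for example, $1134\cdot352$ for $q=12$. I would cite them from the covering array tables (Colbourn's online tables), as the paper does for $\CAN(6,8,2)$.

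The only real obstacle is step 3: confirming that the covering-array sizes underlying the previous table entries are precisely $1134$, $5632$ and $25000$. If the table values differ, the plan would be to use whatever value is listed, since any valid $\mathrm{CA}(N;6,8,t)$ yields $K_{4t}(8,2)\le 256N$. As a sanity check, $5632=4096+1536$ is plausible for $t=4$, since the lower bound is $4^6=4096$. Likewise $25000$ is plausible against the lower bound $5^6=15625$.

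The remaining verification is arithmetic. Each new bound is $256/352=8/11$ of the old one, which also confirms that all four entries improve.
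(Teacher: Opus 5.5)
Your argument is the paper's: apply the K\'eri--\"Osterg\aa rd rule with $K_4(8,2)\le256$ from the octacode and $\CAN(6,8,t)\le85,1134,5632,25000$ cited from Colbourn's covering-array tables, and your four products are correct. Your side remarks about the previous entries are wrong, though this does not affect the statement: only $29920=85\cdot352$, while $1134\cdot352=399168$, $5632\cdot352=1982464$ and $25000\cdot352=8800000$ all exceed the previous bounds $323040$, $1954905$ and $7929856$, so the covering-array values must be cited from the covering-array tables rather than read off K\'eri's entries, and the improvement ratio is $8/11$ only for $q=8$.
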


These use $\CAN(6,8,t)\le85,\,1134,\,5632,\,25000$ for $t=2,3,4,5$~\cite{colbourn}. The previous upper bounds were $29920$, $323040$, $1954905$ and $7929856$~\cite{keri-tables}.

\section{A new bound for \texorpdfstring{$K_4(6,2)$}{K4(6,2)}}\label{sec:62}

\begin{proposition}
$33\le K_4(6,2)\le48$.
\end{proposition}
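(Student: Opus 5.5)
The upper bound is by explicit construction and the lower bound is by citation, so the proof has two independent halves.

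\textbf{Lower bound.} The bound $33\le K_4(6,2)$ is not new. I would quote it from the existing tables (Kéri's tables~\cite{keri-tables}, which also list the previous upper bound $52$ from~\cite{ost99}). As a sanity check, the sphere bound gives
\[
K_4(6,2)\ \ge\ \left\lceil \frac{4^6}{V_4(6,2)}\right\rceil=\left\lceil\frac{4096}{1+18+135}\right\rceil=\lceil 26.6\rceil=27,
\]
and the tabulated value $33$ improves on this. I would not attempt to reprove the lower bound.

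\textbf{Upper bound.} I would exhibit an explicit code $C\subseteq\Z_4^6$ with $|C|=48$ and covering radius $2$, as found by the symmetry-prescribed search described in Section~\ref{sec:methods}.
\begin{itemize}
\item Fix a group $G$ of monomial automorphisms of $\Z_4^6$, for instance coordinate permutations combined with translations by a small linear subcode.
\item Take $C$ to be a union of $G$-orbits, so that $C$ is described by a handful of orbit representatives.
\item A natural structure to aim for is $48=3\cdot 16$: a union of three cosets of a $\Z_4$-linear (or $\F_4$-additive) code $D$ of size $16$. Checking covering then reduces to the $4^6/16=256$ cosets of $D$.
\item I would list the generators of $D$ and the three translates in the paper.
\end{itemize}

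\textbf{Verification.} It suffices to show that every one of the $4^6=4096$ words lies within Hamming distance $2$ of $C$. This is a finite computation, and it can be reported as checked by computer. If $C$ is a union of cosets $v_i+D$ of a linear $D$, the check becomes: every syndrome class $s$ of $D$ is reachable from some $\sigma(v_i)$ by a vector of weight at most $2$. That is $256$ syndromes against $3\cdot V_4(6,2)=462$ shifted ball syndromes, which is small enough to tabulate by hand in principle. The argument is then transparent and not just a computer claim.

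\textbf{Main obstacle.} The hard part is finding $C$ in the first place, not verifying it. The sphere bound leaves only modest slack: $48\cdot154/4096\approx1.8$. I would run the orbit-based search (tabu search or integer programming over unions of $G$-orbits) for several candidate groups $G$ until a $48$-word covering appears. If no structured code is found, I would fall back to presenting an unstructured list of $48$ words with a computer verification.
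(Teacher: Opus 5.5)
\paragraph{The lower bound.} Your lower-bound half is fine in substance, and the sphere-bound check $\lceil 4096/154\rceil=27$ is correct. One correction: the paper takes $33$ from Gijswijt and Polak, not from K\'eri's tables. Those tables predate Gijswijt and Polak's work; they still give $240$ for $K_4(8,2)$, where Gijswijt and Polak have $251$. Citing the tables would therefore probably not give you $33$, so cite Gijswijt and Polak directly.

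\paragraph{The upper bound: the gap.} The genuine gap is in the upper bound. $K_4(6,2)\le48$ is a pure existence statement, and its proof \emph{is} the object: a specific $48$-word code together with a check that its covering radius is $2$. Your proposal describes a search that might find such a code, but it never produces one, so the upper bound is not proved. The ``main obstacle'' you name is the entire content of the statement. The paper closes it by listing the code explicitly in Table~\ref{tab:62}. Its covering radius is checked by two independent programs.

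\paragraph{The suggested structure.} The shortcut you aim for would not have led to that code. The paper's code is a union of $7$ orbits of the group generated by negation and by the XOR translations by $\{000000,133033,222022,311011\}$, with symbols read in $\F_2^2$. Its stabiliser under $\Z_4$-translations has order only $2$. If a code had the form $\bigcup_{i=1}^{3}(v_i+D)$ with $D$ a $16$-word $\Z_4$-linear code, then $D$ would lie inside that stabiliser. So the known code is not of your ``three cosets'' form. The paper also says explicitly that it has no conceptual, hand-checkable proof. Nothing guarantees that a code of your preferred structure exists.

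Your fallback, an explicit list of words plus a computer verification, is exactly the paper's argument. It becomes a proof only once the list is actually supplied.
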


The lower bound is from~\cite{gp}; the previous upper bound was $52$~\cite{ost99}. The $48$ words of a code attaining the upper bound are listed in Table~\ref{tab:62}. The code is invariant under the negation $x\mapsto -x\pmod4$ and under the XOR translations by $\{000000,\,133033,\,222022,\,311011\}$, where each symbol is read as an element of $\F_2^2$. It is the union of $7$ orbits of the group these generate. It is not a union of cosets of any nontrivial $\Z_4$-linear code: its stabiliser under $\Z_4$-translations has order $2$. We have no conceptual proof. The covering radius was checked by two independent programs (Section~\ref{sec:methods}).

\begin{table}[ht]
\centering\small
\begin{tabular}{llllllll}
001220&002220&003220&010112&010303&011031&020133&020311\\
022002&030101&030332&033013&100020&103132&103301&111031\\
113100&113322&122002&123121&123330&130213&131213&132213\\
200020&202111&202333&211031&212123&212310&220202&221202\\
223202&232130&232321&233013&300020&301103&301312&310231\\
312231&313231&321110&321323&322002&331122&331300&333013
\end{tabular}
\caption{A quaternary code of length $6$, size $48$ and covering radius $2$.}\label{tab:62}
\end{table}

\section{Computations}\label{sec:methods}

\paragraph{Search.} The octacode and the code of Table~\ref{tab:62} were found by a tabu search for covering codes that are unions of orbits of a prescribed group $G$ of Hamming isometries. The search is in the style of Östergård, run over all open cells of Kéri's tables with $q\le4$. A simple learned policy chose $G$ from a menu of generator families: translations, coordinate permutations and symbol maps. The same search with $G$ trivial did not beat $352$ for $K_4(8,2)$ in $24$ hours on $40$ cores. After the octacode was recognized, a second search looked for codes that are $\Z_4$-linear or additive, or unions of their cosets. It covered all $29$ open quaternary cells with $n\le11$, and it found no further improvement.

\paragraph{Verification.} Every code was checked by a program that shares no code with the search. It computes the distance from every word to the code by multi-source breadth-first search in the Hamming graph. For the codes $\cP_m$ we also computed the coset weight distribution directly for $3\le m\le9$, by breadth-first search in the Cayley graph of $\Z_4\times\GR(4,m)$ generated by the syndromes of weight-one vectors. The results agree with Theorem~\ref{thm:main}: for $m=3,5,7,9$ the weights are $(1,24,231)$, $(1,96,3999)$, $(1,384,65151)$ and $(1,1536,1047039)$, and for $m=4,6,8$ the weights include $120$, $2016$ and $32640$ cosets of weight $3$. Part~(d) was checked by enumerating all vectors of weight at most $4$ for $m=3,4,5$. For odd $m$ this found exactly the predicted $14$ and $1240$ codewords of weight $4$. Lemma~\ref{lem:teich} was also checked numerically for $3\le m\le7$. All codes and scripts are available from the author.

\section{Concluding remarks}

\begin{enumerate}
\item For even $m$ the code $\cP_m$ fails to have covering radius $2$ only on the $\binom N2$ syndromes $(2,x)$ with $\Tr(\eta/\xi)=1$. A modification that repairs these at small cost in length would give quaternary codes of odd codimension with density close to $9/8$. Unstructured search does not seem to help, because the codes $\cP_m$ are very special. Hill-climbing over $\Z_4$-linear codes with $4^6$ cosets and length $32$ never came close to $\cP_5$: the best code it found left $516$ of the $4096$ syndromes uncovered.
\item Other classical $\Z_4$-codes do not behave the same way. For the quaternary Goethals codes, with parity-check columns $(1,t,2t^3)$, we computed Hamming covering radius $5$ for $m=3,5$.
\item It would be interesting to know whether $9/8$ can be beaten for $q=4$ and $R=2$, for instance by blockwise direct-sum constructions as in the binary case~\cite{struik}.
\end{enumerate}

\section*{Acknowledgements}
The computer searches, the identification of the octacode as the first member of this family, the proof of Theorem~\ref{thm:main} and a first draft of this paper were produced with the assistance of an AI system (Claude, Anthropic). All computational claims were verified by independent programs. The author takes full responsibility for the content of this paper.

\end{document}